\newenvironment{protocol}[1]{\begin{protocolth}\begingroup\rm {\it #1}}
{\endgroup\end{protocolth}\vspace{0.3cm}}
\newtheorem{protocolth}{Protocol}
\newtheorem{cor}{Corollary}
\newtheorem{istmt}{Important Statement}
\begin{document}
\title{Making Code Voting Secure against Insider Threats using Unconditionally Secure MIX Schemes and Human PSMT Protocols}
\author{\hfill\\[-1.5cm]}
\institute{}
\author{Yvo Desmedt\inst{1,2}
     \thanks{A part of this work was done while being, part time,
      at RCIS/AIST, Japan.}
    \and Stelios Erotokritou\inst{1,3}}

     \institute{Department of Computer Science, The University of Texas at Dallas, US
     \and
     Department of Computer Science,University College London, UK\\ \email{\{y.desmedt,s.erotokritou\}@cs.ucl.ac.uk}
     \and
     CaSToRC, The Cyprus Institute, Nicosia, Cyprus}

\maketitle
\begin{abstract}
Code voting was introduced by Chaum as a solution for using a possibly
infected-by-malware device to cast a vote in an electronic voting application.
Chaum's work on code voting assumed voting codes are physically delivered to
voters using the mail system, implicitly requiring to trust the mail system.
This is not necessarily a valid assumption to make - especially if the mail system cannot be trusted.
When conspiring with the recipient of the cast ballots, privacy is broken.

It is clear to the public that when it comes to privacy, computers and ``secure'' communication over the Internet cannot fully be trusted. This emphasizes the importance of using: (1) Unconditional security for secure network communication. (2) Reduce reliance on untrusted computers.

In this paper we explore how to remove the mail system trust assumption in code voting.
We use PSMT protocols (SCN 2012) where with the help of visual aids, humans can carry out $\mod 10$ addition correctly with a 99\% degree of accuracy.
We introduce an unconditionally secure MIX based on the combinatorics of set systems.

Given that end users of our proposed voting scheme construction are humans we \emph{cannot use} classical Secure Multi Party Computation protocols.

Our solutions are for both single and multi-seat elections achieving:
\begin{enumerate}[i)]
  \item An anonymous and perfectly secure communication network secure against a $t$-bounded passive adversary used to deliver voting,
  \item The end step of the protocol can be handled by a human to evade the threat of malware.
\end{enumerate}
We do not focus on active adversaries.
\newline\noindent{\bf Keywords:} Voting Systems, Internet Voting, Information Theoretic Anonymity, Private and Secure Message Transmission, Computer System Diversity.
\end{abstract}
\section{Introduction}\label{sec:intro}
\setcounter{page}{1}
    Electronic voting over the Internet enables to cast votes from an Internet-connected device from \emph{any} physical Internet accessible location compared to booth based electronic voting systems developed by the cryptographic community~\cite{scantegrityJournal,NHR2004}. Internet voting does not require voters to be physically present at a polling station.

    Even though secure Internet voting is in its infancy, many countries and organizations are considering adoption or have already done so such as Estonia~\cite{Maaten04} and Switzerland~\cite{switzEvoting}.
    In Estonia, participation increased by 17\%~\cite{reason1}.
    Similarly, after IACR used the Helios Internet voting system~\cite{heliosVotingSystemsSS} which allowed its member's who are based in different \emph{geographical} locations to cast their secure vote online, voting increased from 20\% to around 30\%-40\%.

    Experts agree that achieving secure Internet voting will be even more difficult than booth-based electronic voting~\cite{rivestRubin}. For example, the 2003 CRA Grand Research Challenges Workshop on Information Security~\cite{grandChallenge} ranked secure Internet voting as one of the most challenging open problems in information security. These issues were put in the spotlight at the 2013 RSA Conference panel~\cite{RSAPanel} and by Rivest in~\cite{rivestTalk}.
    The difficulties lie in the fact that computational devices are vulnerable to security attacks and are easy to hack. Although SSL uses cryptography, modern browsers are vulnerable to attacks such as click-jacking, cross-site scripting, and man-in-the-browser attacks - as demonstrated against Helios 2.0 in~\cite{EstehghariDesmedt10}.

    Given that the computer of a voter can easily be hacked, in 2001 Chaum proposed a breakthrough solution called ``code voting''~\cite{Chaum01} where one can use a possibly hacked computer to perform a secure operation.
    In code voting, a voter receives through the \emph{postal mail} a long enough unique code for every candidate. To vote, voters would just enter the code corresponding to the candidate of their choice.

    Chaum's approach to code voting assumes the postal mail to be secure from a reliability and privacy viewpoint.
    This is not a valid assumption to make.
    Indeed, a collaboration of the postal service with the returning officer\footnote{A returning officer oversees elections in one or more constituencies~\cite{returningOfficer}.} may allow for the anonymity of all votes to be broken by divulging the identity of voters to whom specific voting codes were delivered.
    Another problem~\footnote{\label{first_footnote}Since we focus on a passive adversary, our paper does not address this attack.} is that if one knows who is likely not to vote, Chaum's scheme is not very secure against ballot stuffing.
    Furthermore, if malicious postmen do not deliver voting codes, this prevents
    voters from casting their votes~\footref{first_footnote}. If the election is tight and the number of undelivered ballots is high, this could undermine the reliability and trustworthiness of code voting through the postal service.
    So, one question we address is \emph{how we can make Chaum's code voting secure against $t$ passive insiders}.

    Obviously we need to maintain the anonymity of voters.
    One way to achieve anonymity is through the use of MIX-networks. These were first introduced by Chaum in~\cite{Chaum81} and are used in electronic voting. MIX-networks allow senders to input a number of (usually encrypted) messages to a MIX-network which then outputs and delivers each message to all recipients without the receiver being able to identify the sender. Various ways with which MIX-networks are constructed are described in Section~\ref{sec:prevWorks}. The main issue with such constructions is that they are based on tools based on computational assumptions which when used within the context of an electronic voting scheme only allows for \emph{conditional security} thus conditional privacy and conditional anonymity to be achieved.

    Note that no conditional secure cryptosystem designed so far has withstood cryptanalysis for more than 300 years. Quantum computers will undermine computational voting schemes cryptographers have proposed, in particular these based on ElGamal. For many goals, unconditionally secure solutions have already been proposed, e.g., since 1988~\cite{BenOrGoldwasser88,ChaumCrepeauDamgard88} we have unconditionally secure multiparty computation. This is a further motivation for proposing an unconditionally secure voting scheme in which $t$ insiders can be corrupted. 
    Due to the revelations by Snowden~\cite{Snowden0}, some have questioned the security of the NIST standards~\cite{Snowden2}. So, one can wonder whether we want to promote voting systems which might be broken, if not now, then in the future. The importance of requiring unconditional vote security is further highlighted with the following example:
    \begin{quote}
    In 2020 Alice turns 18 and votes using a popular ElGamal based electronic voting scheme. 50 years later, Alice is a candidate for president of the USA. Imagine that in 2070 USA politics is going through a new McCarthy~\cite{mccarthy} witch hunt. Unfortunately for Alice, ElGamal security has since been broken. The newspapers find that Alice voted for the what is then considered the ``\emph{wrong}'' party!
    \end{quote}
    In this paper we focus on \emph{unconditional security} proposing alternative MIX constructions (using set systems and shares of messages), to generate the \emph{correctness of the vote unconditionally}.
    To counter technological threats and the possible influence of elections by foreign governments (where hardware are manufactured), our proposed Internet code voting solution uses the concept of \emph{diversity}, first described in~\cite{diverseForrestSA97}. So, we employ a \emph{diversity of computing systems} to achieve security in our proposed solution.
    Using diversity of network paths we also ensure that any $t$-bounded adversarial presence is unable to break the privacy of any votes. We consider the $t$-bounded computationally unlimited adversary to be capable of taking control of any node between the vote authority and the voters which includes nodes in the MIX-network, nodes in the communication network or voters computational devices (through malware). It should be noted that we do not consider the human voters to be corruptible.

    The main part of our work assumes a passive adversary which can only observe but cannot cause deviation of protocol execution in any way. We also assumes that the adversary cannot look at the information on the whole network but only inside $t$ nodes.
    Our solution considering an active adversary will be presented in a future full version of this paper.

    Considering a $t$-bounded adversary we emphasise the following:
    \begin{istmt}\label{istmt:statementFranklin}
            As shown in~\cite{FranklinY04}, when the number of corrupted nodes is at most $t$, the minimum number of disjoint paths to allow for private communication between a sender and a receiver is $t+1$.
    \end{istmt}
    \begin{cor}\label{cor:corStatementFranklin}
    Because of the above, voters will have to use a number of computing devices to securely receive (or dually send) their voting codes.
    \end{cor}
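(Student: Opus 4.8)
The plan is to derive the corollary directly from Important Statement~\ref{istmt:statementFranklin} by identifying the voter with the receiver (dually, the sender) of a PSMT instance and showing that a single computing device would act as a bottleneck through which every one of the $t+1$ required disjoint paths must pass. First I would fix the communication model: the vote authority is the sender, the human voter is the receiver, and every network path reaching the voter terminates at some computing device operated by that voter, since the human can neither decode nor interface with the network unaided. In this model a device used by the voter is simply an internal node lying on one or more of the sender--receiver paths.

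Next I would argue by contradiction. Suppose $t \ge 1$ and the voter uses a single device $D$. Then every path from the vote authority to the voter passes through $D$, so $D$ is a vertex separating sender from receiver; consequently there do not exist two vertex-disjoint sender--receiver paths, let alone $t+1$ of them, so by Important Statement~\ref{istmt:statementFranklin} perfectly private communication is impossible. Concretely, an adversary corrupting $D$ (e.g.\ via malware, which is within the stated adversarial model) learns the voting codes and breaks privacy. Hence the voter must use more than one device; more precisely, to realise $t+1$ internally vertex-disjoint paths all the way to the receiver the voter needs at least $t+1$ distinct devices, each terminating a different path. The dual claim for transmitting voting data follows by reversing the roles of sender and receiver, the PSMT guarantees being symmetric in this respect.

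The only real subtlety I anticipate is justifying the premise that the human voter does not themselves constitute an additional device-free path, i.e.\ that the human can neither absorb nor emit network messages except through a computing device. This is precisely the working assumption of the paper --- humans are limited to very simple operations such as $\bmod\ 10$ addition with visual aids --- so I would state it explicitly as part of the model rather than attempt to prove it. I would also flag the degenerate case $t = 0$, where a single trusted device trivially suffices, to make clear that the corollary concerns the genuinely adversarial regime $t \ge 1$.
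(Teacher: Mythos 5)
Your proposal is correct and takes essentially the same route as the paper, which treats the corollary as an immediate consequence of Important Statement~\ref{istmt:statementFranklin}: since $t+1$ vertex-disjoint paths are required and every path to the human voter must terminate at a computing device, a single device would be a common cut vertex, so distinct devices (one per disjoint path, matching the paper's later use of $t+1$ devices for the $t+1$ shares) are needed. Your added care about the human not constituting a device-free path and the degenerate case $t=0$ only makes explicit what the paper leaves implicit.
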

    The impact of Corollary~\ref{cor:corStatementFranklin} is not as bad as it might initially seem. \emph{Nowadays, many people in developed countries can have effortless access to more than one device} such as PC's, laptops, smartphones and tablets.  Such devices can include those they own or can access through friends and relatives or through public access (such as a library). Furthermore, each of these devices can be connected to a communication network in a different manner (Internet or  cellular) which could be serviced by different providers. These devices may run different operating systems (e.g. Windows, IoS, Android) thus a threat to one device may not necessarily constitute a threat to another - even with the same user.

    Similar to the work of~\cite{BlockiBD14,HopperB01} which considers security protocols as used by humans who can execute them without relying on a fully trusted computer we do the same in this paper in the context of Internet voting.

    Motivated by all the above, we propose an unconditional Internet code voting protocol which is secure against the possible presence of an adversary and malware in the network and on voter's devices respectively.
    We present solutions for single seat and multi-seat elections both of which are designed to be user friendly - so that human voters can use it correctly with high accuracy\footnote{It should be noted that the main goal of our work is Internet code-voting secure against $t$ insiders. The work of~\cite{financialCrypto} is independent and their MIX servers are different using a homomorphic, unconditionally hiding
    commitment scheme to encrypt audit information and achieve unconditional security. Furthermore, their solution assumes the use of two mix-networks one of which is private and thus cannot be corrupted by the adversary. Our solution does not make this assumption and instead counters the threat of the adversary presence for protocol correctness accordingly. However, due to the possible presence of malware the only way we know how to achieve this, is using unconditionally secure techniques achieved through the use of cover designs. Additionally we use results from previous work~\cite{humanPSMTSCN} which allows for humans to privately and reliably receive and decode messages, something achieved with unconditional security.}.
    In EVOTE2014~\cite{RR14b} the authors addressed a very similar problem as our current work. However, their solution achieves conditional security which could be broken in the future against a computationally unlimited adversary. Furthermore, the authors consider the adversary to be present in the MIX network only and do not take into account the possible presence of malware upon the tablets with which voters will use to cast their votes. Passive malware could possibly identify to an adversary how someone voted, whereas active malware could alter the way someone votes - thus rigging the result of an election.

    When combined with~\cite{humanPSMTSCN}, one can view our proposed method for delivering codes to voters as a distributed implementation of a one-time-pad-secured communication channel for votes. Because of this, our solution can also be used for other established code voting schemes as it is a way of removing the use of a possibly untrusted mail system and transmitting the voting codes securely, reliably and anonymously to voters.

    The text is organized as follows. Background and relevant previous work are presented in Section~\ref{sec:backGround}. In Section~\ref{sec:highLevelDescr} a high level description of the protocol is given and we identify the required cryptographic tools. In Section~\ref{sec:didacticTransmitReply} we provide a simplified version of the MIX private and anonymous communication protocol. This is used in Section~\ref{sec:privAnonCommunAbelian} in a more efficient manner where we present private and anonymous communication protocols for the transmission of voting codes to voters for single seat and multi seat elections. In Section~\ref{sec:evotingProt} the electronic code voting protocol is presented and the security proof of the protocol is also given.
\section{Background and Previous Work}\label{sec:backGround}
\subsection{Previous Work}\label{sec:prevWorks}
    This section describes previous work related to various aspects to be presented in this paper.

    MIX-networks can be constructed using a shuffle (permutation). One way of achieving this~\cite{KhazaeiMW12,kishnaMix} is by using approaches which are based on zero-knowledge arguments~\cite{Furukawa05,Wikstrom02a}. In~\cite{DesmedtKurosawa00Eurocrypt} the use of zero-knowledge was avoided. MIX-networks based on zero-knowledge arguments can be used in electronic voting protocols - as has been proposed in recent publications~\cite{Groth09CRYPTO,GrothI08EUROCRYPT}. Earlier work~\cite{SakoK94} similarly used shuffles in electronic voting based on zero-knowledge proofs. Other work on MIX-networks includes the work of Abe in~\cite{Abe98Eurocrypt}.

    Such constructions are based on computational assumptions which only allow for \emph{conditional security}. The work we present is based on the stronger model of \emph{unconditional security}.

    Anonymity in practice is difficult to achieve.
    One proposed implementation was that of~\cite{KattiCohenKatabi07} but it was shown to be insecure in~\cite{hachItOut}.

    A voting scheme similar to the one we propose which achieves information theoretic security and requires the voter to carry out modular addition is that presented in~\cite{MoranN10SplitTrust}. Contrary to the voting scheme proposed in this paper, the work of~\cite{MoranN10SplitTrust} is \emph{not} an Internet voting scheme as it requires voters to cast their votes at a polling station.

    The work of~\cite{CramerFSY96linear} describes an election scheme which requires computational modular exponentiation operations to be carried out by voters. These operations require software or hardware. Furthermore, public key-cryptography is used, meaning that the security properties achieved are computational and not information theoretic - as achieved in our scheme.
\subsection{Message Transmission Security Properties}\label{sec:securityProperties}
    Below we define message transmission security properties which will be required throughout the text. For formal definitions, see~\cite{DolevDworkWaartsYung93}.
    In our setting we have a single receiver $S$ connected to $m$ number of senders ($r_1,\cdots,r_m$) over a possibly corrupt underlying network.

    \textbf{(Perfectly) Correct} - When the receiver accepts a message, it was sent by a sender $S$.

    \textbf{(Perfectly) Reliable} - When a sender $S$ transmits a message, this message will be received by the receiver with probability 1.

    \textbf{(Perfectly) Private} - Only the designated receiver(s) can read a message transmitted by $S$. i.e., for any coalition of $t$ parties, their probability of correctly determining a message is the same whether the coalition is given their transmission view or not.

    \textbf{(Perfect) Security} - Means perfect correctness, perfect reliability and perfect privacy.

    \textbf{(Perfectly) Anonymous} - Considering the single receiver wants to receive $m$ different messages over the network so that each of $m$ number of senders transmitted one of these messages (and each message is transmitted and received only once), perfect anonymity is achieved when for any coalition of $t$ parties, their probability of correctly determining the sender of \emph{any} message is the same whether the coalition observes the transmission view or not.
    In the context of Internet voting, perfect anonymity is achieved when the voting protocol used does not facilitate any party involved in the voting process to correlate any cast vote to a specific voter with greater probability than any other.
\subsection{Existential Honesty}
    Some of our ideas use concepts of {\it existential honesty}, defined in~\cite{DesmedtKurosawa00Eurocrypt} as:
    \begin{quote}
        ``It is possible to divide the MIX servers into blocks, which guarantee that one block is free of dishonest MIX servers, assuming the number of dishonest MIX servers is bounded by $t$.''
    \end{quote}
    To achieve this,~\cite{DesmedtKurosawa00Eurocrypt} defined and used the following (see also~\cite[Section~2.3]{humanPSMTSCN} for an extensive description of set systems and how these relate to covering designs.):
    \begin{definition}[\cite{colbournHandbook}]\label{def:DesmedtKurosawaSetSystem}
    A set system is a pair $(X,{\cal B})$, where $X \triangleq \{1,$ $ 2, \ldots, m\}$ and $\cal{B}$ is a collection of blocks $B_i \subset X$ with $i = 1, 2, \ldots, b$.
    \end{definition}
    \begin{definition}[\cite{DesmedtKurosawa00Eurocrypt}]\label{def:DesmedtKurosawa}
    $(X,{\cal B})$ is an  $(m,b,t)$-verifiers set system if:
    \begin{enumerate}
    \item $|X|=m$,
    \item\label{cond:size-of-B} $|B_i| = t+1$ for $i = 1,2,\ldots,b$, and
    \item\label{cond:free-of-bad-ones} for any subset $F \subset X$ with $|F| \leq t$, there exists a $B_i \in {\cal B}$ such that $F \cap B_i = \emptyset$.
    \end{enumerate}
    \end{definition}

    We \emph{assume that private channels} connect MIX servers of corresponding blocks (i.e. when for block $B_k$, MIX server $MIX_{k,i}$ needs to communicate with MIX server $MIX_{k+1,j}$, where $1 \leq i,j \leq t+1$ and $k<b$, then there is a private channel). We also assume such channels between the receiver and $MIX_{1,i}$ and similarly, between $MIX_{b,i}$ and the sender.
\subsection{Human Perfectly Secure Message Transmission Protocols}\label{sec:otherSubprotsHumanPSMT}
    Perfectly secure message transmission (PSMT) protocols where the sender or receiver is a human were introduced in~\cite{humanPSMTSCN}.
    In such protocols it is assumed that the human receiver does \emph{not} have access to a trusted device since these may be faulty and/or infected with malware.
    Because the receiver is a human, such protocols aim to achieve perfectly secure message transmission (PSMT) in a computationally efficient and computationally simple manner. It is also important that the amount of information and operations the human receiver should process be kept to a minimum.

    Addition $\bmod  10$ was used by humans in these protocols~\cite{humanPSMTSCN} to reconstruct the secret message of the communication protocol from received shares through addition $\bmod  10$. The idea of using addition $\bmod  10$ for human computable functions was also used in~\cite{BlockiBD14} but within a different security context.
    By regarding in~\cite{humanPSMTSCN} $Z_{10}(+)$ as a subgroup of $S_{10}$ the operation became very reliable for humans to perform. Experiments have shown that given clear, correct and precise instructions, coupled with visual aids, allowed for the correct usage of these protocols by a very high percentage of human participants.
\subsection{Secure Multiparty Computation in Black-box Groups}\label{sec:otherSubprotsBlackBox}
    Black box multiparty computation protocols against a passive adversary for non-Abelian group have been presented in~\cite{CohenDIKMRR13} and in~\cite{DesmedtYao12} through the use of a $t$-reliable $n$-coloring admissible planar graph.
    These papers studied in particular the existence of secure $n$-party protocols to compute the $n$-product function $f_G(x_1,$ $\cdots,$ $x_n) := x_1 \cdot \ldots \cdot x_n$ where each participant is given the private input $x_i$ from some non-Abelian group $G$ where $n\geq 2t+1$. It was assumed that the parties are only allowed to perform black-box operations in the finite group $G$, i.e., the group operation $((x, y) \mapsto x \cdot y)$, the group inversion  $(x \mapsto x^{-1})$ and the uniformly random group sampling ($x \in_R G$).
\section{Secure Code Voting with Distributed Security}\label{sec:highLevelDescr}
    In this section we provide a high level description of the secure code voting protocol we will present in this paper. We assume the reader is familiar with Chaum's code voting scheme~\cite{Chaum01}.
\subsection{High Level Description}\label{sec:description}
    We call Code Generation Entity (CGE) the entity in the code voting protocol which is responsible for creating the codes with which voters will cast their votes.
    These codes are unique and are sent to the voters so that each of these codes is used only once for the \emph{whole} election. For single seat elections each voter receives as many codes as there are candidates. For multi-seat elections each voter receives a single permutation - which is a permutation of the alphabetical ordering of the candidates.
    After these codes pass through a MIX network, they will be sent to voters using perfectly secure message transmission, i.e. using secret sharing.
    Voters will receive each share using a different device, identify the shares which correspond to the candidate of their choice and reconstruct using human computation this voting code.
    To cast their vote, voters will send this code back to the CGE via the MIX servers, which perform inverse operations. For each of the received cast codes, the CGE will identify the candidate the code corresponds and will tally up the cast votes for each candidate.

    Our protocol does \emph{not} use the mail system for the delivery of voting codes to voters, but instead these are sent by the CGE to voters over a MIX network and using PSMT. Similarly, cast votes will be sent by voters to the CGE over a network as explained in Section~\ref{sec:sendinVotes}.
\subsection{Required Cryptographic Tools}\label{sec:requiredTools}
    The process should not facilitate the CGE (and indeed any $t$ other parties) \emph{should not} be able to identify that a specific voter (from the set of $v$ voters) cast a particular vote.
    Furthermore, \emph{a number of the underlying network nodes may be corrupt}. Even though secret sharing is used, any protocol should ensure that voting codes are not learned by any $t$ parties apart from voters themselves, otherwise anonymity of votes could be broken.

    Human perfectly secure message transmission protocols as presented in~\cite{humanPSMTSCN} are employed. We rely on the feasibility tests performed which confirm that humans can perform these basic operations.
    We use the secret sharing scheme friendly to humans as presented in~\cite[Section~2.2]{humanPSMTSCN} which guarantees perfect privacy unconditionally.
    Except for the voters computing the codes from the shares they receive, all other computations are carried out by computers, of which no more than $t$ of these are curious.
\section{Transmit and Reply Protocol}\label{sec:didacticTransmitReply}
    In this section we present the first of the required primitives - a perfectly private and perfectly anonymous network communication protocol. For didactic purposes, the simplest form of our proposed protocol will be presented - with more efficient constructions described later.

    Suppose that we have a single receiver and $v$ senders each of whom needs to receive a secret one time pad so as to sender a secret back to the receiver in an interactive anonymous way\footnote{The dual problem is that instead of having $v$ senders, we have $v$ receivers and one sender. Obviously a solution for the first provides a similar solution for the second.}.

    We assume the passive adversary controls at most $t$ MIX servers.
    As in Chaum's work~\cite{Chaum81} and most conditional MIX servers, each MIX server is 
    involved in one mixing in our protocol. $t+1$ blocks of MIX servers will be required - denoted as $B_1, \ldots, B_{t+1}$, with each block consisting of $t+1$ MIX servers and we use $B_k=\{MIX_{k,1},$ $MIX_{k,2},$ $\ldots,$ $MIX_{k,t+1}\}$ to identify MIX servers of the $k^{th}$ block and call $MIX_{k,1}$ ``$B_k$'s leader''.
\subsection{Protocol Main Idea}\label{sec:protMainIdeaDidactic}
    The receiver will share each of the $v$ one-time pads to transmit into $t+1$ shares using XOR.
    Each (of the $t+1$) share will be given to a corresponding MIX server (i.e. one of the $t+1$ servers) in the first block $B_1$.

    The shares of the $i^{th}$ one-time pad and those of the $j^{th}$ one-time pad might be transposed and will also be altered. To guarantee shares of the same pad stay together, the transpositions and alterations are chosen by the block leader.
    After the last MIX operation, the final block of MIX servers delivers the shares of the one time pad to the senders, with each sender reconstructing the received and altered one-time pad sent by the receiver.
    Each sender will then XOR the secret message to be sent to the receiver with the received altered one-time pad and send the result to the receiver over the MIX network. During this reverse transmission, the inverse alterations will be applied by each block leader.

    By XOR'ing the one time pad initially sent out by the receiver, the secret message sent by each sender can be obtained by the receiver.
\subsection{The MIX Communication Protocol - 1A: Receiver to Sender Transmission}
    We now present the steps in the MIX communication protocol for the transmission of the one-time pads from the receiver to the set of senders.
    \begin{protocol}{Private and Anonymous Communication Protocol}\label{prot:privAnonProtD}
    \begin{enumerate}[{\bf {Step }1}]
        \item Let $\pi^1_{i}$ be the $i^{th}$ one-time pad (where $1\leq i\leq v$). The receiver shares each $\pi^1_{i}$ into $t+1$ shares $\pi^1_{i,j} \in F_{2^l}$ using XOR (where $1\leq j\leq t+1$) and privately sends $\pi^1_{i,j}$ to the corresponding MIX $MIX_{1,j}$ in block $B_1$.
        \item\label{enum:dPRAMIX1}
            The \emph{leader} of $B_1$ (we call $MIX_{1,1}$) informs all others MIX servers in $B_1$ how they have to permute the $i$-index of all above $\pi^1_{i,j}$. This permutation is defined by $\rho_1\in_R S_v$.
        \item\label{enum:dPRAMIX12}
            On the $i$ indices all MIX servers in $B_1$ apply the permutation $\rho_1$. So, $\pi^{1}_{i,j}:=\pi^1_{\rho_1(i),j}$.
        \item
            The \emph{leader} of $B_1$ chooses $t+1$ random bit string modifiers $\omega^1_{i,j}\in_R F_{2^l}$ and privately sends $\omega^1_{i,j}$ to parties in $B_1$.
        \item
            For each $(i,j)$ the $t+1$ values $\pi^1_{i,j}$ are regarded as shares of $\pi^1_{i}$. Similarly, the $t+1$ values $\omega^1_{i,j}$ are regarded as shares of $\omega^1_{i}$.

        The MIX server in $B_1$ computes $\pi^2_{ij}=\omega^1_{ij} + \pi^1_{ij}$. $\pi^{2}_{i,j}$ are regarded as shares of $\pi^{2}$, the $\rho_1(i)$ permuted and modified one time pad.
        \item
        Steps 2-5 are repeated, incrementing by one the indices of $B_1$ and $B_2$ until the last block $B_{b}$ is reached.
         \item
         Shares held by MIX-servers of block $B_{t+1}$ are denoted as $\phi_{i,j}$. $MIX_{t+1,j}\in B_{t+1}$ then sends $\phi_{i,j}$ to the $i^{th}$ sender.
    \end{enumerate}
    \end{protocol}
\subsection{The MIX Communication Protocol - 1B: Sender to Receiver Transmission}
    Upon the end of the first phase, each sender reconstructs their respective altered one-time pad using XOR over all shares received.
    Using this altered one-time pad, a sender encrypts their secret using XOR.

    Senders then proceed to send their encrypted secret to the \emph{leader} of block $B_{t+1}$. These are then sent back to the receiver in much the same way as transmitted from receiver to sender, only this time, data are sent between \emph{leaders} of MIX blocks, the inverse permutations will be applied and all modifiers used will now have be invalidated. Thus the leaders of each block of MIX servers will use the inverse permutations $\rho_{b}^{-1}$ and invalidation of modifiers $-\omega_{i}^k$ (invalidating using XOR).

    The data that are sent back to the receiver correspond to the encrypted message transmitted by senders, and by applying XOR to this using the respective one-time pad, the secret message transmitted by senders can be obtained.

    It should be noted, that this anonymous and private communication protocol can be used for various practical applications. One such example is anonymous therapy sessions with extensions of the protocol also allowing for anonymous feedback.
\subsection{Security Proof}
    In this section we present the security proof for Protocol~\ref{prot:privAnonProtD}.
    \begin{theorem}\label{the:secProofAnonProtD}
    Protocol~\ref{prot:privAnonProtD} is a reliable, private and anonymous message transmission protocol.
    \end{theorem}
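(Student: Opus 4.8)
The plan is to prove the three asserted properties --- reliability, privacy, anonymity --- separately, all of them resting on one structural observation. Since there are $t+1$ blocks, each consisting of $t+1$ MIX servers, while the passive adversary controls at most $t$ servers in total, a pigeonhole argument (the idea behind condition~\ref{cond:free-of-bad-ones} of Definition~\ref{def:DesmedtKurosawa}) shows there is a block $B_{k^*}$ none of whose servers --- in particular not its leader --- is corrupted. This honest block contributes a uniformly random permutation $\rho_{k^*} \in_R S_v$ and a fresh batch of uniformly random modifiers $\omega^{k^*}_{i,j}$, all distributed only over private channels into and out of $B_{k^*}$ and stored only inside honest servers, hence never seen by the adversary; these are the only ingredients the whole argument really needs.

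\emph{Reliability.} As the adversary is passive, every server executes the protocol honestly, so it suffices to follow one one-time pad through both phases. Let $\kappa$ be the composition of all the block permutations and let $\Omega_i$ be the XOR of the suitably re-indexed modifiers applied along the way; then sender $i$ reconstructs $P_i = \pi^1_{\kappa(i)} \oplus \Omega_i$ from its $t+1$ XOR shares, returns $c_i = s_i \oplus P_i$, and the return phase applies exactly the inverse permutations and XORs out exactly the same modifiers, so the value arrives back at the receiver's slot $\kappa(i)$ as $s_i \oplus \pi^1_{\kappa(i)}$, and the receiver recovers $s_i$ by XOR-ing with the pad it itself generated for that slot. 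Every operation involved --- a permutation of the indices, XOR with a fixed string --- is a bijection, so this succeeds with probability $1$. The one point I would spell out carefully is that the forward phase modifies shares at \emph{every} server of a block whereas the return phase runs only between block \emph{leaders}; one must check that, at the level of reconstructed values, these two descriptions are mutually inverse.

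\emph{Privacy.} I would build on two facts: (i) a $(t+1)$-out-of-$(t+1)$ XOR sharing is perfectly private against any $t$ of its shares, and (ii) at each block every pad and every modifier is held as $t+1$ shares, one per server, so an adversary seeing at most $t$ servers of a block sees at most $t$ shares of each $\pi^k_i$, each $\omega^k_i$, and of $P_i$. Consequently the original pad $\pi^1_i$ is uniform and independent of the adversary's entire view (and, a fortiori, so is the full one-time pad $P_i$, thanks additionally to the fresh $\omega^{k^*}$ from the honest block). Since each $s_i$ occurs in the adversary's view only in the masked forms $s_i \oplus \pi^1_{\kappa(i)} \oplus (\text{modifiers})$ on the return path, no coalition of $t$ MIX servers learns anything about any $s_i$.

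\emph{Anonymity.} The only thing linking ``which sender'' to ``which original message'' is the composite permutation $\kappa$, so I would show the adversary's posterior on $\kappa$ is uniform on $S_v$. A corrupted non-leader server of a block still learns that block's permutation (the leader broadcasts $\rho_k$ within the block), so the adversary may know every $\rho_k$ with $k \neq k^*$ and can track shares positionally through every block except $B_{k^*}$; but it never learns $\rho_{k^*}$, and the fresh modifiers $\omega^{k^*}_{i,j}$ re-randomize the share values across $B_{k^*}$, so no value-based matching is possible there either. Hence $\kappa = (\text{known}) \circ \rho_{k^*} \circ (\text{known})$ with $\rho_{k^*}$ uniform on $S_v$ and independent of the adversary's view, so $\kappa$ is still uniform on $S_v$ conditioned on that view, every sender-to-message correspondence is equally likely, and this is exactly perfect anonymity as defined in Section~\ref{sec:securityProperties}. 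I expect this last step to be the main obstacle: one must argue precisely that, after conditioning on \emph{everything} the adversary sees --- the known $\rho_k$'s, the known modifiers, and every observed share --- the residual law of $\rho_{k^*}$, hence of $\kappa$, is still uniform, i.e.\ that nothing observable correlates with the hidden permutation.
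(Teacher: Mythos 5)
Your proposal is correct and follows essentially the same route as the paper: reliability from the passivity of the adversary, privacy from the $(t+1)$-out-of-$(t+1)$ sharing with at most $t$ corrupted servers per block, and anonymity from the existence of an adversary-free block whose permutation and modifiers remain hidden, rendering the adversary's view independent of the sender-to-message correspondence. Your phrasing of anonymity as uniformity of the composite permutation conditioned on the view is just a restatement of the paper's two-message indistinguishability argument, and your added bookkeeping (tracing $\kappa$ and $\Omega_i$ through both phases) only makes the same proof more explicit.
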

    \begin{proof}
    The protocol achieves perfect reliability due to the passive nature of the adversary.
    Perfect privacy is achieved as each one-time pad or encrypted message is ``shared'' over $t+1$ shares. As each MIX server is used only once and as the adversary can control at most $t$ MIX servers, secrecy of these transmitted data is retained.
    We now prove the perfect anonymity of the protocol - for simplicity of the proof we assume that there are only two messages (two one time pads).

    As $t+1$ blocks of MIX servers are used and each MIX server is used only once,
    there exists a block $B_i$ - $1 \leq i \leq b$, free from adversary controlled MIX servers. Because of this, the adversary is unable to learn the modifiers and permutation which are added and implemented respectively to the shares of the messages.

    Assuming the adversary is present in $B_{i+1}$ and absent from $B_{i}$, the view of the adversary of a share for both messages can be one of the following two possibilities:
    $(\omega_1^i+\pi^{i-1}_1, \omega_2^i+\pi^{i-1}_2),\qquad (\omega_2^i+\pi^{i-1}_2, \omega_1^i+\pi^{i-1}_1)$
    
    Obviously, the adversary cannot distinguish between the first and the second possibility as the modifiers and permutation used in block $B_i$ are random and not learned by the adversary.
    Indeed, there exists an ($\omega_1',\omega_2'$) such that
    ($\omega_2^i+\pi^{i-1}_2,\omega_1^i+\pi^{i-1}_1$)=($\omega_1'+\pi^{i-1}_1,\omega_2'+\pi^{i-1}_2$).
    So, the adversary cannot distinguish whether the messages have been interchanged or not.

    Without loss of generality, the proof can be extended to any number $v$ of messages.
    \makeatletter \let\@xpar=\par\penalty10000\hfill$\Box$\vspace{.5ex}\@xpar\penalty-8000\noindent \makeatother
    \end{proof}
\section{Reducing the Number of MIX Servers}\label{sec:privAnonCommunAbelian}
    In this section we improve on the ``Transmit and Reply Protocol'' presented in Section~\ref{sec:didacticTransmitReply} presenting a solution for the single seat election case where an Abelian group is used.

    Our solution uses Chaum's code voting and considers a single receiver (e.g., CGE) and $v$ human voters who each need to receive voting codes (one code per candidate) in a non-interactive anonymous way. We consider the CGE as the receiver and the human voters as the senders of the communication because at the end of the combined protocol, the human voters will send back to the CGE the voting code which corresponds to the candidate of their choice.
    We regard codes intended for the same receiver as a long string and the MIX servers MIX the strings (i.e. those intended for different receivers).

    A more efficient network of MIX servers is used as our solution is not confined to using each MIX server only once, thus the total number of MIX operations done is $b$. We denote the set of MIX servers by $X$ and assume we have an $(X,{\cal B})$ set system, which is an $(m,b,t)$-verifiers set system set system as defined in~\cite{DesmedtKurosawa00Eurocrypt}.
    We let $B_k=\{MIX_{k,1},$ $MIX_{k,2},\ldots,$ $MIX_{k,t+1}\}$ and call $MIX_{k,1}$ ``$B_k$'s leader''.

    The main idea of the protocol is very similar to the communication protocol of the previous section.
    This time, the receiver (e.g., CGE) will share each of the $v$ messages to transmit using an appropriate secret sharing scheme (and not using XOR). In a similar fashion, messages are permuted and altered as they are transmitted within the MIX network. After the last MIX operation, the final block of MIX servers delivers the shares of messages to the senders, with each sender reconstructing the secrets (voting codes) sent by the receiver. We will assume the transmission of the shares of these secrets uses the human friendly method presented in~\cite{humanPSMTSCN}. Similarly, since a code is only used once, it can be modified using addition over a finite Abelian group. To be compatible with~\cite{humanPSMTSCN} one such example is addition $\bmod 10$ over the group used. Senders will then transmit back to the receiver the voting code corresponding to their choice.
\subsection{Virtual Directed Acyclic Graphs}\label{sec:virtDirAcycGrph}
When an Abelian group is used and when blocks of the $(m,b,t)$-verifiers set system can share common MIX servers between them, we define the construction of a {\it virtual\/} vertex-labeled Directed Acyclic Graph (DAG). The set of vertices of the DAG is composed of parties participating in the protocol (which is similar to Protocol~\ref{prot:privAnonProt}), with the source of the graph being the receiver of the protocol and the sink being a sender.

The directed edges of the DAG identify the transmission of messages from one party to another \emph{amongst different levels} in the DAG. We define levels of the DAG as the receiver, a sender and the different blocks of MIX servers used. Considering block $B_i$ as a tuple (ordered set),
when $B_i$ is a block where $|B_i|=l$ and $b\in B_i$, at {\it location\/} $k$ in this tuple, we say that $b$ is at position $k$.
With the above definition, directed edges of the DAG will occur (i) from the receiver to all $b_j$ in $B_1$ ($1\leq j\leq l$), (ii) from each $b_j$ in block $B_b$ to the sender, (iii) moreover, we have edges between nodes in $B_i$ and nodes in $B_{i+1}$.
The following is required:
\begin{enumerate}
  \item If a unique color was to be assigned to each party of the protocol, based on the results of~\cite{DesmedtWangBurmester05}, the sender and receiver can privately communicate, if when choosing any $t$ colours and removing the vertices of the DAG with those $t$ colours the sender and receiver remain connected - meaning that there still exists a directed path from the sender to the receiver on the reduced DAG.
  \item We require that if at level $k$ the parties in $B_{k}$ receive shares of $\pi_i^k$, the parties in $B_{k+1}$ (i.e., at level $k+1$) receive shares of $\pi_i^{k+1}$=$\omega^k_i + \pi_{\rho(i)}^k$.
\end{enumerate}

Two methods can be used to achieve the above requirements. One uses re-sharing - such as the redistribution scheme described in~\cite{DesmedtJajodia97}.
The other uses a large set of MIX servers $X$ to guarantee
the following property.
\begin{definition}\label{def:wireConfinment}
We say that set $X$ of MIX servers is under {\em $t$-confinement} if all members of set $T$ where $|T| = t$ appear in at most $t$ positions over all blocks of MIX servers used and this for all $T \subseteq X$ where $|T| = t$.
\end{definition}
It is easy to see that the above satisfies the DAG requirements.
\subsection{The MIX Protocol}\label{sec:privAnonCommunAbelianProt}
    In the case of Internet voting this is used as a pre-voting protocol for the transmission of voting codes to voters and it is used to achieve anonymity of voting codes.
    We assume $S$ to be a finite Abelian group and denote with $v$ the number of senders, and thus the number of messages (sets of voting codes) that need to be transmitted. In the following, we only describe the required difference when compared to Protocol~\ref{prot:privAnonProtD}.
    \begin{protocol}{Private and Anonymous Random Communication Protocol}\label{prot:privAnonProt2}
    \begin{enumerate}[{\bf {Step }1}]
        \item Let $s_i$ be the $i^{th}$ message (where $1\leq i\leq v$). The sender shares each $s_i$ by choosing $l$ shares $\pi^1_{i,j}\in_R S$ (using an appropriate secret sharing scheme over an Abelian group where $1\leq j\leq l$) and privately sends $\pi^1_{i,j}$ to the corresponding party $B_{1,j}$ in $B_1$.
            \begin{itemize}
            \item As an $(m,b,t)$-verifiers set system is used, $l=t+1$ denotes the number of shares.
            \end{itemize}
        \item\label{enum:PRAMIX1}
            Same as in Protocol~\ref{prot:privAnonProtD}.
        \item\label{enum:PRAMIX12}
            Same as in Protocol~\ref{prot:privAnonProtD}.
        \item
            The \emph{leader} of $B_1$ chooses modifiers $\omega^1_{i,j}\in_R S$ and privately sends $\omega^1_{i,j}$ to parties in $B_1$.
        \item
            Similar as in Protocol~\ref{prot:privAnonProtD}. Only:

        The MIX servers in $B_1$ compute shares of $\pi^2_{i}=\omega^1_{i} + \pi^1_{i}$, i.e. party $P_{j}\in B_{i}$ adds the modifiers it receives from the leader of $B_{i}$ to the share(s) it holds.
        The shares of the $\pi^2_{i}$ are denoted as $\pi^{2}_{i,j}$.
        \item
        If the concept of $t$-confinement is not used, re-sharing of shares $\pi^{2}_{i,j}$ is carried by out by parties in $B_1$ using the redistribution scheme described in~\cite{DesmedtJajodia97}.
        That means that each party in $B_2$ receives $l = t + 1$ values, which they then compress.
        \item
        Steps 2-5 are repeated incrementing by one the indices of $B_1$ and $B_2$ until the last block $B_{b}$ is reached. For all iterations - except when the last block $B_{b}$ is reached, Step 6 is also repeated (except if $t$-confinement is used).
        \item
        If $t$-confinement is not used, shares held by the MIX-servers of block $B_{b}$ are re-shared.
         \item
         Shares held by MIX-servers of block $B_{b}$ are denoted as $\phi_{i,j}$. $MIX_{b,j}\in B_b$ then sends $\phi_{i,j}$ to the $i^{th}$ voter using~\cite{humanPSMTSCN}.
    \end{enumerate}
    \end{protocol}
    It should be noted, that as in~\cite{humanPSMTSCN}, MIX servers will send shares to voters using network disjoint paths, as the communication network cannot be trusted with the adversary capable of listening to at most $t$ of these paths.
    The way voters cast their vote will be described in Section~\ref{sec:evotingProt}.
\subsection{Security Proof}
    In this section we present the security proof for Protocol~\ref{prot:privAnonProt2}.
    \begin{cor}\label{the:secProofAnonProt2}
    Protocol~\ref{prot:privAnonProt2} is a reliable, private and anonymous message transmission protocol.
    \end{cor}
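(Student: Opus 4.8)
The plan is to obtain this as a corollary of Theorem~\ref{the:secProofAnonProtD} by isolating the only two substantive changes: (i) an arbitrary finite Abelian group $S$ replaces $F_{2^l}$ (XOR being merely the group law of the special case $S=F_{2^l}$, and an appropriate secret sharing scheme over $S$ replacing XOR-sharing), and (ii) blocks of the $(m,b,t)$-verifiers set system may now share MIX servers, so the ``each server used once'' property of Protocol~\ref{prot:privAnonProtD} must be recovered either by $t$-confinement (Definition~\ref{def:wireConfinment}) or by the redistribution step of~\cite{DesmedtJajodia97}. Reliability needs no new argument: the adversary is passive, every share is forwarded unaltered, and each sender reconstructs exactly the permuted-and-modified code the receiver intended.

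For privacy, each message $s_i$ is split into $t+1$ shares over $S$, so any $\leq t$ of them are independent of $s_i$; it therefore suffices to show no coalition of $\leq t$ MIX servers ever holds all $t+1$ shares of a single $s_i$ at one level. When $t$-confinement is used this is exactly the DAG/colour-connectivity requirement of Section~\ref{sec:virtDirAcycGrph}: invoking~\cite{DesmedtWangBurmester05}, deleting the $\leq t$ corrupted vertices leaves a directed receiver--sender path, and Definition~\ref{def:wireConfinment} ensures that at every level at least one honest server along that path carries a share the adversary never learns. When re-sharing is used instead, privacy follows from the security of the redistribution scheme of~\cite{DesmedtJajodia97}, which re-randomizes the $(t{+}1)$-out-of-$(t{+}1)$ sharing at each level without revealing the secret to any $t$ servers. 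Finally, the last step delivers the $\phi_{i,j}$ to the voters via the human PSMT protocol of~\cite{humanPSMTSCN} over $t+1$ network-disjoint paths, which is itself perfectly private and reliable, so a $t$-bounded adversary learns nothing there either.

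For anonymity, condition~\ref{cond:free-of-bad-ones} of Definition~\ref{def:DesmedtKurosawa} guarantees a block $B_i$ free of corrupted MIX servers, whose leader draws a random permutation $\rho_i\in_R S_v$ and random modifiers over $S$ that the adversary never sees. Restricting as in the proof of Theorem~\ref{the:secProofAnonProtD} to two messages, the adversary's view of their shares downstream of $B_i$ is $(\omega_1^i+\pi_1^{i-1},\,\omega_2^i+\pi_2^{i-1})$ or the transposed pair, and since $S$ is a group there exist $(\omega_1',\omega_2')$ with $(\omega_2^i+\pi_2^{i-1},\,\omega_1^i+\pi_1^{i-1})=(\omega_1'+\pi_1^{i-1},\,\omega_2'+\pi_2^{i-1})$, so the honest block's permutation is perfectly hidden; the extension to $v$ messages is routine, yielding perfect anonymity.

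The main obstacle I expect is the privacy argument under $t$-confinement: one must check carefully that Definition~\ref{def:wireConfinment} — every $t$-subset of $X$ occupying at most $t$ positions over all blocks used — genuinely implies the colour-connectivity hypothesis of~\cite{DesmedtWangBurmester05}, i.e. that no placement of $t$ corrupted colours can simultaneously disconnect the virtual DAG and expose all $t+1$ shares at some level. Turning the ``it is easy to see'' remark of Section~\ref{sec:virtDirAcycGrph} into a precise implication (and likewise confirming that iterated applications of the redistribution scheme compose without privacy loss) is the only real work; the reliability and anonymity parts are inherited almost verbatim from Theorem~\ref{the:secProofAnonProtD}.
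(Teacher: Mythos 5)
Your proposal is correct and follows essentially the same route as the paper's own proof: reliability inherited from the passive adversary, privacy argued separately for the $t$-confinement case (at most $t$ shares visible) and the re-sharing case (level-to-level independence via~\cite{DesmedtJajodia97} plus the connectivity condition of~\cite{DesmedtWangBurmester05}), and anonymity reduced to the existence of an adversary-free block guaranteed by condition~\ref{cond:free-of-bad-ones} of the verifiers set system, reusing the two-message transposition argument of Theorem~\ref{the:secProofAnonProtD}. Your closing remark that the ``it is easy to see'' implication from Definition~\ref{def:wireConfinment} to the colour-connectivity hypothesis deserves a careful check is a fair observation, but the paper leaves that step at the same level of detail.
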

    \begin{proof}
    Formally, we have:\\
    \ \ \textbf{Perfect Reliability} - This is the same as in Theorem~\ref{the:secProofAnonProtD}.

    \noindent \textbf{Perfect Privacy} - The protocol achieves perfect privacy as each message is ``shared'' over $l=t+1$ shares.
    In the case of $t$-confinement, the view of the adversary will consist of at most $t$ shares. This number is one less that the number required to reconstruct a secret and thus perfect privacy is achieved. In the case of re-sharing, the re-sharing guarantees that shares at level $i$ are independent of those at level $i+1$ (note that the adversarial parties are passive). The rest follows from~\cite{DesmedtWangBurmester05} and through the use of re-sharing or $t$-confinement.
    When using re-sharing we ensure that there is no cut of $t$ vertices (colors)
    that can disconnect the sender and the receiver. This is because the resharing of shares makes
    certain that the parties in block $b_i$ receive shares from $t+1$ parties in block $b_{i-1}$.
    So, any adversarial $t$ parties in block $b_{i-1}$ will not allow to cut the graph.
    It is easy to see that the condition of~\cite{DesmedtWangBurmester05} (i.e. no $t$ parties are able to cut a graph) is satisfied when using $t$-confinement thus allowing for secure solutions.

    \noindent \textbf{Perfect Anonymity} - This is very similar to the anonymity proof of Theorem~\ref{the:secProofAnonProtD}. The only difference is that now where a lower number of MIX servers are used, due to Property 3 from the definition of verifier set systems, there exists a block $b_i$ - $1 \leq i \leq b$, free from adversary controlled MIX servers. Because of this, the adversary is unable to learn the modifiers and permutation which are added and implemented respectively to the shares of the messages.
    \makeatletter \let\@xpar=\par\penalty10000\hfill$\Box$\vspace{.5ex}\@xpar\penalty-8000\noindent \makeatother
    \end{proof}
\subsection{Use of non-Abelian Group  - Multi-seat Election Case}
    When a non-Abelian group is used, the protocol is similar to that presented in Section~\ref{sec:privAnonCommunAbelianProt}. Due to the non-Abelian nature of the group, alternative additional techniques will have to be employed to manage the fact that dealing with shares cannot be done locally (due to the multiplication) thus this needs to be shared and securely computed among many parties using techniques presented in~\cite{DesmedtYao12}.

    Suppose we have an election in which we have $s$ seats in which every voter can vote for up to $s$ of the $c$ candidates - where $s\leq c$.
    To enable \emph{blinding} of the code, we give to each voter a secret permutation $\pi \in S_c$, where $S_c$ is the symmetric group.
    For each favourite candidate $i$ the voter wants to vote for, $\pi(i)$ is transmitted to the returning officer.

    Note that $\pi$ is \emph{not} necessarily unique to the election, as opposed to Chaum's code voting. The protocol is organised to avoid that this creates a problem.
    In the case of Internet voting, the following protocol is used as a pre-voting protocol, for the transmission of $v$ number of voting ``codes'' (i.e. permutations) to $v$ number of voters and it is used to achieve anonymity of voting codes. We assume $S=S_c$ to be a finite non-Abelian group.

    It should be noted that the protocol to be presented is only useful for the private and anonymous transmission of permutations with which receivers can cast their vote.
    \begin{protocol}{Private and Anonymous Random Communication Protocol}\label{prot:privAnonProt}
    \begin{enumerate}[{\bf {Step }1}]
        \item Same as in Protocol~\ref{prot:privAnonProt2} only now a non-Abelian group is used and permutations are transmitted.
        \item
            The \emph{leader} of $B_2$ chooses modifiers $\omega^2_{i,j}\in_R S_c^l$ and privately sends $\omega^2_{i,j}$ to parties in $B_2$ such that the $l$ values $\omega^2_{i,j}$ are regarded as shares of $\omega^2_{i}$.\footnote{As shown in~\cite{DesmedtYao12}, to securely compute $\pi$ and $\omega$ where $\pi$ is chosen by one party and $\omega$ by another, we need $2t+1$ parties where $t$ parties are curious. To mimic as closely as possible the working of~\cite{DesmedtYao12}, $\omega^2_{i,j}$ is chosen by the leader of $B_2$ and \emph{not} by the leader of $B_1$.}
        \item
            For each $(i,j)$ the $l$ values $\pi^1_{i,j}$ are regarded as shares of $\pi^1_{i}$.

        The MIX servers in $X'_{1,2} \subseteq X$ where $|X'_{1,2}| \geq 2t+1$ and $B_1 \cup B_2 \subseteq X'_{1,2}$ compute shares of $\pi^{2}_{i}=\omega^2_{i}\circ \pi^1_{i}$ using a black box non-Abelian multiparty computation protocol\footnote{Note that the MIX servers in $B_1 \cup B_2$ can also be a in $X'_{1,2}$ where $|X'_{1,2}| \geq 2t+1$. Additionally, the efficiency of black box non-Abelian multiparty computation protocols is better when $|X'_{1,2}| >> 2t+1$.} (see Section~\ref{sec:otherSubprotsBlackBox}). This is done so that $\omega^2_{i}$ blinds $\pi^1_{i}$. The shares of the product are denoted as $\pi^{2}_{i,j}$ and are obtained by the parties\footnote{Note that~\cite{DesmedtYao12} allows to organise the computation such that the output, i.e. shares of $\pi^{2}_{i}$, are received by parties in $B_2$.} in $B_2$.

        \item
        The \emph{leader} of $B_2$ informs all other MIX servers in $B_2$ how they have to permute the $i$-index of all shares they hold from the above operations. This permutation is defined by $\rho_2\in_R S_v$. On the $i$ indices the MIX servers in $B_2$ apply the permutation $\rho_2$. So, $\pi^{2}_{i,j}:=\pi^2_{\rho_2(i),j}$.
        \item
        The above three steps are repeated by incrementing by one the indices of $B_1$ and $B_2$ (thus $B_k \neq B_{k+1}$).
        After parties in $B_k$ permute the $i$ indices of $\pi^{k}_{i,j}$ using $\rho_k$ - where $2 \leq k \leq b-1$, the \emph{leader} of $B_{k+1}$ chooses modifiers $\omega^3_{i,j}\in_R S_c^l$  which are given to parties in $B_k$, the black box non-Abelian multiparty computation sub-protocol is executed by parties in $X'_{k,k+1} \subseteq X$ where $B_k \cup B_{k+1} \subseteq X'_{k,k+1}$  $|X'_{k,k+1}| \geq 2t+1$ and the process continues till the final block of servers $B_b$ is reached.
        \item
       After parties in $B_b$ permute the $i$ indices of $\pi^{b}_{i,j}$ using $\rho_b$, the \emph{leader} of $B_1$ chooses modifiers $\omega^1_{i,j}\in_R S_c^l$  which are given to parties in $B_1$, the black box non-Abelian multiparty computation sub-protocol is executed between parties in block $B_b$ and $B_1$ and the output of which is held by parties in $B_1$. $MIX_{1,j}\in B_1$ sends the output it holds to the $i^{th}$ voter using~\cite{humanPSMTSCN}.
    \end{enumerate}
    \end{protocol}
    It should be noted, that as in~\cite{humanPSMTSCN}, MIX servers will send shares to voters using network disjoint paths, as the communication network cannot be trusted with the adversary capable of listening to at most $t$ of these paths.
    The way voters will use what they receive to cast their vote will be described in Section~\ref{sec:evotingProt}.

    We now present the security proof for Protocol~\ref{prot:privAnonProt}.
    \begin{theorem}\label{the:secProofAnonProt}
    Provided Protocol~\ref{prot:privAnonProt} together with the appropriate black box non-Abelian multiparty computation sub-protocol is used, then Protocol~\ref{prot:privAnonProt} is a reliable, private and anonymous random transmission protocol.
    \end{theorem}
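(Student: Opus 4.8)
The plan is to follow the template of the proofs of Theorem~\ref{the:secProofAnonProtD} and Corollary~\ref{the:secProofAnonProt2}, establishing perfect reliability, perfect privacy and perfect anonymity in turn, and then to isolate the two genuinely new ingredients: that blinding is now performed by group multiplication in the non-Abelian group $S_c$, and that each blinding-and-permutation step is realised by the black-box non-Abelian multiparty computation sub-protocol of~\cite{DesmedtYao12} rather than by local share manipulation. For \textbf{reliability}, since the adversary is passive every MIX server and every invocation of the sub-protocol behaves honestly, so all shares are forwarded and recombined correctly; reliability of the sub-protocol itself against a passive adversary is inherited from~\cite{DesmedtYao12}, so each voter receives a correct set of shares of a permutation and the argument is identical to Theorem~\ref{the:secProofAnonProtD}.

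For \textbf{privacy}, each permutation is dealt into $l=t+1$ shares, one per member of a block, and the adversary controls at most $t$ MIX servers, so within any single block it sees at most $t$ shares. The black-box sub-protocol is run among $X'_{k,k+1}$ with $|X'_{k,k+1}|\geq 2t+1$ and at most $t$ curious parties, so by~\cite{DesmedtYao12} the adversary learns nothing about the inputs or outputs beyond its own $\leq t$ shares. To rule out that the adversary's views across different levels jointly reveal a full reconstruction threshold of $t+1$ shares, I would again invoke either $t$-confinement (Definition~\ref{def:wireConfinment}) or the re-sharing of~\cite{DesmedtJajodia97}, exactly as in Corollary~\ref{the:secProofAnonProt2}, and appeal to~\cite{DesmedtWangBurmester05} to conclude that no set of $t$ colours disconnects the receiver from the sender in the associated virtual DAG while simultaneously no such set ever accumulates a reconstruction threshold.

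For \textbf{anonymity}, by Property~\ref{cond:free-of-bad-ones} of the $(m,b,t)$-verifiers set system there is a block $B_i$ containing no adversarial MIX server, so the permutation $\rho_i\in_R S_v$ and the modifiers applied there are unknown to the adversary. The remaining step is the indistinguishability argument of Theorem~\ref{the:secProofAnonProtD} adapted to multiplication: for any two permutations $\pi_1,\pi_2$ and any modifiers $\omega_1,\omega_2$, the adversary's view of a share in the two orderings $(\omega_1\circ\pi_1,\ \omega_2\circ\pi_2)$ and $(\omega_2\circ\pi_2,\ \omega_1\circ\pi_1)$ is equally likely, because in the group $S_c$ there exist $\omega_1',\omega_2'$ (namely $\omega_j'=\omega_{3-j}\circ\pi_{3-j}\circ\pi_j^{-1}$) with $(\omega_2\circ\pi_2,\ \omega_1\circ\pi_1)=(\omega_1'\circ\pi_1,\ \omega_2'\circ\pi_2)$, and $\omega_1',\omega_2'$ are uniformly distributed whenever $\omega_1,\omega_2$ are. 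Since the black-box sub-protocol leaks nothing beyond its (shared) output, it does not help the adversary correlate inputs with outputs, and extending from two messages to $v$ messages is routine.

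The \textbf{main obstacle} is the composition with the black-box non-Abelian MPC sub-protocol. One must argue carefully that running~\cite{DesmedtYao12} over $X'_{k,k+1}$ — whose membership overlaps consecutive blocks and has size $\geq 2t+1$ — does not break the block structure on which the anonymity argument relies; in particular that padding out to $2t+1$ parties does not force an adversarial party into the clean block $B_i$, and that the deliberate choice of having $\omega$ picked by the leader of $B_{k+1}$ rather than $B_k$ is exactly what makes the simulation of~\cite{DesmedtYao12} go through. Reconciling the $2t+1$ requirement of secure non-Abelian multiplication with the $(t+1)$-sized blocks of the verifiers set system is where the real work lies; once this bookkeeping is settled, everything else reduces to the already-proved Abelian case.
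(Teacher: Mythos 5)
Your proposal is correct and follows exactly the route the paper takes: the paper's entire ``proof'' of Theorem~\ref{the:secProofAnonProt} is a single sentence deferring to the proof of Theorem~\ref{the:secProofAnonProtD} and to the black-box non-Abelian MPC results of~\cite{CohenDIKMRR13,DesmedtYao12}, which is precisely the reduction you carry out. Your writeup is in fact strictly more detailed than the paper's, and the non-Abelian indistinguishability step you supply (taking $\omega_j'=\omega_{3-j}\circ\pi_{3-j}\circ\pi_j^{-1}$, which is uniform whenever the $\omega$'s are) together with the composition issue you flag about reconciling the $2t+1$-party sub-protocol with the $(t+1)$-sized blocks are exactly the points the paper leaves implicit.
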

    The proof of the above theorem is similar to the proof of Theorem~\ref{the:secProofAnonProtD}, but
      relying on either~\cite{CohenDIKMRR13,DesmedtYao12}.
\section{Electronic Code Voting Protocol}\label{sec:evotingProt}
    In this section we outline how components of previous sections are combined.
\subsection{Preparation, Mixing and Transmission of Voting Codes}
    As described in Section~\ref{sec:description} the CGE is responsible for creating the codes with which voters will cast their votes.
    We first explain this for the single-seat election.

    Considering an election has $c$ number of candidates and that there are $v$ number of voters, the CGE will create $v$ random \emph{initial} codes for each of the $c$ candidates.
    In total, $c \times v$ unique number of codes will be generated. The CGE will then group these codes to form $v$ number of $c-tuples$, with each tuple containing a single
    code for each of the $c$ candidates.

    Each of these codes will then be transmitted as one-time pads to the voters in the same way as described by Protocol~\ref{prot:privAnonProt2}.
    It should be noted that Protocol~\ref{prot:privAnonProt2} describes the transmission of only $v$ codes as opposed to
    $c \times v$ required by the voting protocol.
    To transmit all the voting codes, $c$ executions of Protocol~\ref{prot:privAnonProt2} will be executed at the same time.
    These executions should \emph{not be independent between them but instead should use the same permutations} ($\rho \in_R S_{v}$ in Step 2) and modifiers ($\omega_{i,j}$ in Step 4) used throughout all executions of the protocol, i.e. the same modifier is used for all codes the same voters will receive and they remain bundled together (i.e. by reusing $\rho$). These $c$ executions can be carried out either in parallel or sequentially, as long as each voter receives $c$ voting codes.

    In the case of multi-seat elections, each voter will receive a single permutation over $S_c$ - which is a permutation of the alphabetical ordering of the candidates. Moreover, Protocol~\ref{prot:privAnonProt} will be used.
\subsection{Receiving and Reconstructing Voting Codes}
    We first explain the single-seat case.
    Each voter will receive $l=t+1$ shares for each voting code, receiving each one using a \emph{different} computational device. It should be noted that the $i^{th}$ share of each of the $c$ voting codes will be received upon the \emph{same} computational device.

    The voter can then identify the code which corresponds to the candidate of their choice. Once all pieces of each code are received, the code corresponding to their choice can be reconstructed in a similar manner as described in Section~\ref{sec:otherSubprotsHumanPSMT}.

    In the multi-seat election, instead of receiving a $c$-tuple, a single permutation is received - which is a permutation of the alphabetical ordering of the candidates. Similar to the single seat case, $t+1$ shares of this permutation will be received by the voter who will reconstruct the permutation as described in~\cite[Section~4.2, Section~4.3]{humanPSMTSCN}. This will allow the voter to identify the candidates of their choice. Supposing the voter wants to vote for candidate $c$ and candidate $c'$, the reconstruction of the permutation will help the voter identify $\pi(c)$ and $\pi(c')$ which correspond to the candidates of their choice. To cast their vote, voters will have to send back to the CGE these $\pi(c)$ and $\pi(c')$ values.
\subsection{Transmission, Mixing and Counting of Cast Votes}\label{sec:sendinVotes}
    We first explain this for the single-seat case.
    A voter identifies the code corresponding to the candidate of their choice and sends this code back to the CGE by transmitting this code \emph{to the leader} of the last block of MIX.

    To transmit voter codes in the reverse direction (towards the CGE), \emph{the leaders} of each block of MIX servers will have to carry out the reserve operations on the codes. Thus the inverse permutations ($\rho_{b}^{-1}$) and modifiers ($-\omega_{i}^k$) are used.
    Once a code arrives to the CGE, it will identify the candidate it corresponds to and the vote will be counted.

    The multi-seat case is similar. Once a voter identifies one of the $\pi(c)$ which corresponds to one of their chosen candidates, they will have to send this $\pi(c)$ to the leader of the last block of MIX servers. Similar to the single-seat case, the reserve operations on the codes will have to be carried out
    Once a voter's $\pi(c)$ arrives to the CGE, the CGE will apply $\pi^{-1}$ and identify the candidate the voting corresponds to and the vote will be counted.

    \noindent \textbf{Acknowledgements:} The authors would like to thank the anonymous referees for their
    valuable comments on improving the presentation and clarity of this paper.

    \noindent The authors would also like to thank Rebecca Wright, Juan Garay and Amos Beimel for expressing their interests in Private and Secure Message Transmission in which one cannot trust the equipment used by the receiver.
    \newpage
\bibliographystyle{abbrv}
\bibliography{voteID2015}

\newcommand{\speak}{\relax}\newcommand{\bibbf}{\relax}\newcommand{\bibpercent}[1]{}
\begin{thebibliography}{10}

\bibitem{grandChallenge}
{F}our {G}rand {C}hallenges in {T}rustworthy {C}omputing.
\newblock In {\em CRA Conference on Grand Research Challenges in Information
  Security and Assurance}.
\newblock {November 16--19, 2003, Warrenton, Virginia.}

\bibitem{switzEvoting}
Official {S}tate of {G}eneva e-voting site.
\newblock {\url{http://www.geneve.ch/evoting/english/welcome.asp}}.

\bibitem{Abe98Eurocrypt}
M.~Abe.
\newblock Universally verifiable mix-net with verification work indendent of
  the number of mix-servers.
\newblock In {\em Advances in Cryptology - {EUROCRYPT} '98, International
  Conference on the Theory and Application of Cryptographic Techniques, Espoo,
  Finland, May 31 - June 4, 1998, Proceeding}, pages 437--447.

\bibitem{BenOrGoldwasser88}
M.~Ben-Or, S.~Goldwasser, J.~Kilian, and A.~Wigderson.
\newblock Multi-prover interactive proofs: How to remove intractability
  assumptions.
\newblock In {\em {Proceedings of the twentieth annual ACM Symp. Theory of
  Computing, STOC}}, pages 113--131, {May 2--4,} 1988.

\bibitem{BlockiBD14}
J.~Blocki, M.~Blum, and A.~Datta.
\newblock Human computable passwords.
\newblock {\em CoRR}, 2014.

\bibitem{financialCrypto}
J.~Buchmann, D.~Demirel, and J.~van~de Graaf.
\newblock Towards a publicly-verifiable mix-net providing everlasting privacy.
\newblock In {\em Financial Cryptography and Data Security 2013}.
\newblock {Short Paper, Okinawa, Japan, April 1–5, 2013.}

\bibitem{Chaum01}
D.~Chaum.
\newblock Sure{V}ote: {T}echnical {O}verview.
\newblock Proceedings of the Workshop on Trustworthy Elections.
  http://www.vote.caltech.edu/wote01/pdfs/surevote.pdf. August 26-29 2001.
  Tomales Bay, CA, USA.

\bibitem{Chaum81}
D.~Chaum.
\newblock Untraceable electronic mail, return addresses, and digital
  pseudonyms.
\newblock {\em Commun. ACM}, 24(2):84--88, February 1981.

\bibitem{ChaumCrepeauDamgard88}
D.~Chaum, C.~Cr\'{e}peau, and I.~Damg\aa{}rd.
\newblock Multiparty unconditionally secure protocols.
\newblock In {\em {Proceedings of the twentieth annual ACM Symp. Theory of
  Computing, STOC}}, pages 11--19, {May 2--4,} 1988.

\bibitem{scantegrityJournal}
D.~Chaum, A.~Essex, R.~Carback, J.~Clark, S.~Popoveniuc, A.~T. Sherman, and
  P.~L. Vora.
\newblock Scantegrity: End-to-end voter-verifiable optical-scan voting.
\newblock {\em IEEE Security {\&} Privacy}, 6(3):40--46, 2008.

\bibitem{CohenDIKMRR13}
G.~Cohen, I.~B. Damg{\aa}rd, Y.~Ishai, J.~K{\"o}lker, P.~B. Miltersen, R.~Raz,
  and R.~D. Rothblum.
\newblock Efficient multiparty protocols via log-depth threshold formulae.
\newblock In {\em CRYPTO (2)}, volume 8043 of {\em LNCS}, pages 185--202.
  Springer, 2013.

\bibitem{colbournHandbook}
C.~J. Colbourn and J.~H. Dinitz.
\newblock {\em Handbook of Combinatorial Designs, 2nd Edition (Discrete
  Mathematics and Its Applications)}.
\newblock Chapman \& Hall/CRC, 2006.

\bibitem{CramerFSY96linear}
R.~Cramer, M.~K. Franklin, B.~Schoenmakers, and M.~Yung.
\newblock Multi-autority secret-ballot elections with linear work.
\newblock In {\em EUROCRYPT}, volume 1070 of {\em LNCS}, pages 72--83.
  Springer.
\newblock Zaragoza, Spain, May 1996.

\bibitem{Snowden0}
{Daily Mail}.
\newblock {US senators demand `traitor' NSA whistleblower be extradited from
  Hong Kong to face trial in America}.
\newblock {\url{http://www.dailymail.co.uk/news/article-2338534}}.

\bibitem{DesmedtJajodia97}
Y.~Desmedt and S.~Jajodia.
\newblock Redistributing secret shares to new access structures and its
  applications.
\newblock Tech. Report ISSE-TR-97-01, George Mason University, July 1997.
\newblock ftp://isse.gmu.edu/pub/techrep/97\_01\_jajodia.ps.gz.

\bibitem{DesmedtYao12}
Y.~Desmedt, J.~Pieprzyk, R.~Steinfeld, X.~Sun, C.~Tartary, H.~Wang, and
  A.~C.-C. Yao.
\newblock Graph coloring applied to secure computation in non-abelian groups.
\newblock {\em Journal of Cryptology}, 25(4):557--600, 2012.

\bibitem{DesmedtKurosawa00Eurocrypt}
Y.~Desmedt\speak{} and K.~Kurosawa.
\newblock How to break a practical {MIX} and design a new one.
\newblock In {\em Eurocrypt~2000, Proceedings LNCS 1807}, pages 557--572.
  Springer-Verlag, 2000.
\newblock Bruges, Belgium, May 14-18\bibpercent{(26\% acceptance rate)}.

\bibitem{DesmedtWangBurmester05}
Y.~Desmedt\speak{}, Y.~Wang, and M.~Burmester.
\newblock A complete characterization of tolerable adversary structures for
  secure point-to-point transmissions without feedback.
\newblock In {\em Algorithms and Computation, ISAAC 2005}, volume 7485 of {\em
  LNCS}, pages 277--287.
\newblock December 19 - 21, 2005, Hainan, China\bibpercent{(20\% acceptance)}.

\bibitem{DolevDworkWaartsYung93}
D.~Dolev, C.~Dwork, O.~Waarts, and M.~Yung.
\newblock Perfectly secure message transmission.
\newblock {\em Journal of the ACM}, 40(1):17--47, January 1993.

\bibitem{humanPSMTSCN}
S.~Erotokritou and Y.~Desmedt.
\newblock Human perfectly secure message transmission protocols and their
  applications.
\newblock In {\em SCN}, volume 7485 of {\em LNCS}, pages 540--558. Springer,
  2012.

\bibitem{EstehghariDesmedt10}
S.~Estehghari and Y.~Desmedt.
\newblock Exploiting the client vulnerabilities in internet e-voting systems:
  Hacking {Helios} 2.0 as an example.
\newblock In {\em 2010 Electronic Voting Technology Workshop/Workshop on
  Trustworthy Elections ({EVT/WOTE} '10), August 9--10, 2010}, 2010.

\bibitem{NHR2004}
C.~for American~Politics and Citizenship.
\newblock Characteristics of contemporary voting machines.

\bibitem{diverseForrestSA97}
S.~Forrest, A.~Somayaji, and D.~H. Ackley.
\newblock Building diverse computer systems.
\newblock In {\em Workshop on Hot Topics in Operating Systems}, pages 67--72,
  1997.

\bibitem{FranklinY04}
M.~K. Franklin and M.~Yung.
\newblock Secure hypergraphs: Privacy from partial broadcast.
\newblock {\em SIAM J. Discrete Math.}, 18(3):437--450, 2004.

\bibitem{Furukawa05}
J.~Furukawa.
\newblock Efficient and verifiable shuffling and shuffle-decryption.
\newblock {\em {IEICE} Transactions}, 88-A(1):172--188, 2005.

\bibitem{rivestRubin}
E.~Gerck, C.~A. Neff, R.~L. Rivest, A.~D. Rubin, and M.~Yung.
\newblock The business of electronic voting.
\newblock In {\em Financial Cryptography}, volume 2339 of {\em LNCS}, pages
  234--259. Springer, 2001.

\bibitem{Groth09CRYPTO}
J.~Groth.
\newblock Linear algebra with sub-linear zero-knowledge arguments.
\newblock In {\em CRYPTO}, volume 5677 of {\em LNCS}, pages 192--208. Springer,
  2009.

\bibitem{GrothI08EUROCRYPT}
J.~Groth and Y.~Ishai.
\newblock Sub-linear zero-knowledge argument for correctness of a shuffle.
\newblock In {\em EUROCRYPT}, volume 4965 of {\em LNCS}, pages 379--396.
  Springer, 2008.

\bibitem{heliosVotingSystemsSS}
{Helios}.
\newblock {Helios Voting}.
\newblock {\url{http://heliosvoting.org/}}.

\bibitem{HopperB01}
N.~J. Hopper and M.~Blum.
\newblock Secure human identification protocols.
\newblock In {\em ASIACRYPT}, volume 2248 of {\em LNCS}, pages 52--66.
  Springer, 2001.

\bibitem{KattiCohenKatabi07}
S.~Katti, J.~Cohen, and D.~Katabi.
\newblock Information slicing: Anonymity using unreliable overlays.
\newblock In {\em Proceedings of the 4th USENIX Symposium on Network Systems
  Design and Implementation (NSDI)}, pages 43--56.
\newblock Cambridge, Massachusetts, U.S.A., April~11--13, 2007.

\bibitem{KhazaeiMW12}
S.~Khazaei, T.~Moran, and D.~Wikstr{\"{o}}m.
\newblock A mix-net from any {CCA2} secure cryptosystem.
\newblock In {\em Advances in Cryptology - {ASIACRYPT} 2012 Beijing, China,
  December 2-6}, pages 607--625.

\bibitem{Maaten04}
E.~Maaten.
\newblock Towards remote e-voting: Estonian case.
\newblock In {\em Electronic Voting in Europe - Technology, Law, Politics and
  Society}, volume~47 of {\em LNI}, pages 83--100. GI, 2004.
\newblock July 7th--9th 2004, Bregenz, Austria.

\bibitem{reason1}
A.~Malkopoulou.
\newblock Lost voters: Participation in eu elections and the case for
  compulsory voting., 2009.

\bibitem{MoranN10SplitTrust}
T.~Moran and M.~Naor.
\newblock Split-ballot voting: Everlasting privacy with distributed trust.
\newblock {\em ACM Trans. Inf. Syst. Secur.}, 13(2), 2010.

\bibitem{RR14b}
M.~O. Rabin and R.~L. Rivest.
\newblock Efficient end to end verifiable electronic voting employing split
  value representations.
\newblock (To appear in Proc. EVOTE 2014 (Bregenz, Austria).

\bibitem{rivestTalk}
R.~L. Rivest.
\newblock Thoughts on appropriate technologies for voting.
\newblock Invited keynote given at online special event, ``E-Voting: Risk and
  Opportunity,'' organized by CITP at Princeton University. November 1, 2012.

\bibitem{SakoK94}
K.~Sako and J.~Kilian.
\newblock Secure voting using partially compatible homomorphisms.
\newblock In {\em CRYPTO}, volume 839 of {\em LNCS}, pages 411--424. Springer,
  1994.
\newblock August 21-25, 1994, Santa Barbara, California, USA.

\bibitem{kishnaMix}
K.~Sampigethaya and R.~Poovendran.
\newblock A survey on mix networks and their secure applications.
\newblock In {\em Proceedings of the IEEE}, volume~94, pages 2142--2181.

\bibitem{RSAPanel}
{Security Musings}.
\newblock {2013 RSA Conference Opening Keynotes}.
\newblock
  {http://securitymusings.com/article/3912/2013-rsa-conference-opening-keynotes}.

\bibitem{mccarthy}
{The Cold War Museum}.
\newblock {Senator Joseph McCarthy, McCarthyism and the Witch Hunt}.
\newblock
  {\url{http://www.coldwar.org/articles/50s/senatorjosephmccarthy.asp}}.

\bibitem{Snowden2}
{The Register}.
\newblock {NIST denies it weakened its encryption standard to please the NSA}.
\newblock
  {\url{http://www.theregister.co.uk/2013/09/11/nist_denies_that_the_nsa_weakened_its_encryption_standard/},
  2013/09/11}.

\bibitem{hachItOut}
A.~Tran, N.~Hopper, and Y.~Kim.
\newblock {Hashing it out in public: common failure modes of DHT-based
  anonymity schemes}.
\newblock In {\em Proceedings of {WPES} 2009, Chicago, Illinois, USA, November
  9}, pages 71--80.

\bibitem{returningOfficer}
{Wikipedia}.
\newblock {Returning officer}.
\newblock {\url{http://en.wikipedia.org/wiki/Returning_officer}}.

\bibitem{Wikstrom02a}
D.~Wikstr{\"o}m.
\newblock The security of a mix-center based on a semantically secure
  cryptosystem.
\newblock In {\em INDOCRYPT}, volume 2551 of {\em LNCS}, pages 368--381.
  Springer, 2002.
\newblock Hyderabad, India, December 16-18, 2002.

\end{thebibliography}
\end{document}